\documentclass[conference]{IEEEtran}
\usepackage{amsmath,amssymb,amsthm,mathtools,bm}
\usepackage{graphicx,booktabs,xcolor,array}
\usepackage[hidelinks]{hyperref}
\usepackage{cite}
\usepackage[acronym,shortcuts]{glossaries}
\usepackage{balance}
\usepackage{float}
\graphicspath{{../figures/}}

\newtheorem{theorem}{Theorem}
\newtheorem{lemma}{Lemma}
\newtheorem{corollary}{Corollary}
\theoremstyle{definition}
\newtheorem{assumption}{Assumption}
\newtheorem{definition}{Definition}
\theoremstyle{remark}
\newtheorem{remark}{Remark}

\newcommand{\R}{\mathbb{R}}
\newcommand{\E}{\mathbb{E}}
\newcommand{\av}[1]{\langle #1\rangle}
\newcommand{\norm}[1]{\lVert #1\rVert}
\newcommand{\bH}{\mathbf{H}}
\newcommand{\by}{\mathbf{y}}
\newcommand{\bx}{\mathbf{x}}
\newcommand{\bz}{\mathbf{z}}
\newcommand{\br}{\mathbf{r}}
\newcommand{\bu}{\mathbf{u}}
\newcommand{\bv}{\mathbf{v}}
\newcommand{\bxb}{\bar{\mathbf{x}}}
\newcommand{\sbar}{\bar\sigma}
\newcommand{\zt}{\tilde z}
\newcommand{\cJ}{\mathcal J}
\newcommand{\cW}{\mathcal W}

\newacronym{AMP}{AMP}{approximate message passing}
\newacronym{SE}{SE}{state evolution}
\newacronym{MIMO}{MIMO}{multiple-input multiple-output}
\newacronym{CDMA}{CDMA}{code-division multiple access}
\newacronym{PAM}{PAM}{pulse amplitude modulation}
\newacronym{MSE}{MSE}{mean square error}
\newacronym{SER}{SER}{symbol error rate}
\newacronym{SNR}{SNR}{signal-to-noise ratio}
\newacronym{iid}{i.i.d.}{independent and identically distributed}

\begin{document}

\title{On the State Evolution of Approximate Message Passing with Discontinuous Denoisers
\vspace{-0.5ex}}

\author{\IEEEauthorblockN{Kuranage Roche Rayan Ranasinghe$^\star$, Takumi Takahashi$^\dag$, and Giuseppe Thadeu Freitas de Abreu$^\star$}
\IEEEauthorblockA{\emph{$^\star$School of Computer Science and Engineering, Constructor University, Bremen, Germany}\\
\emph{$^\dag$Graduate School of Engineering, Osaka University, Suita, Japan}\\
Emails: \{kranasinghe,gabreu\}@constructor.university, takahashi@comm.eng.osaka-u.ac.jp}\vspace{-4ex}}

\maketitle

\begin{abstract}
We consider \ac{AMP} with discontinuous denoisers, such as hard slicers, quantizers and hard thresholding, for which the classical \ac{SE} analysis does not apply.
It is first shown that the standard Onsager coefficient; i.e., the average pointwise derivative of the denoiser, misses a boundary term, namely, the sum over the jumps of the jump size times the density of the denoiser input at the jump.
Then, for real \ac{iid} Gaussian matrices, a positive noise variance, a fixed number of iterations and a prior with finite second moment, we transfer the Lipschitz \ac{SE} theorem directly to denoisers, fixed in advance, that are continuously differentiable with bounded derivative except at finitely many jumps, by bounding the distance between the iterates of \ac{AMP} and of \ac{AMP} on Lipschitz ramps of the denoisers through the fraction of inputs that cross a jump.
It is thereby proven that \ac{AMP} follows \ac{SE}, also for test functions such as symbol errors, if its Onsager coefficient contains the boundary term, evaluated either along \ac{SE} or, as we propose, at the noise level estimated from the residual, with the density computed from the known prior.
Simulation results with 4- and 8-level \ac{PAM} demonstrate that \ac{AMP} with a hard slicer, whose standard coefficient is identically zero, fails with this coefficient, while with the proposed one, it follows \ac{SE} and comes close to Bayes-optimal \ac{AMP}.
\end{abstract}

\glsresetall

\section{Introduction}
{\Ac{AMP} \cite{donoho2009message,bayati2011dynamics} has become a standard tool in compressed sensing and in large \ac{MIMO} detection \cite{jeon2015optimality}.
Specifically, \ac{AMP} alternates a matched filter and a separable denoiser $\eta_t$, and adds an Onsager correction, scaled by a coefficient $b_t$ usually given by the empirical average of $\eta_t'$, with which, in the large-system limit, each entry of the denoiser input behaves as the output of a scalar Gaussian channel whose noise variance follows a deterministic recursion, referred to as \ac{SE} \cite{bayati2011dynamics,bolthausen2014iterative,javanmard2013state}.}

Most \ac{SE} proofs, however, require Lipschitz denoisers \cite{bayati2011dynamics,javanmard2013state,berthier2020state,gerbelot2023graph,rush2018finite,feng2022unifying}, which excludes hard decisions, quantizers and hard thresholding, for which failures of \ac{AMP} have in fact been reported: \ac{SE} predictions with hard thresholding ``fail entirely'' \cite{metzler2016denoising}, the divergence of a phase-retrieval denoiser built on the sign function contains a Dirac delta \cite{ma2019optimization,saglietti2020generalized}, and $\ell_0$-\ac{AMP} is unstable \cite{barbier2025phase}.
The remedies proposed so far have been either to smooth the denoiser, for which \ac{SE} holds in the iterated limit in which the smoothing vanishes after the system size has grown \cite{zheng2017does,ma2019optimization}, or to take the divergence from \ac{SE}, albeit without proof \cite{tanaka2005approximate,saglietti2020generalized,barbier2025phase}.

For hard decisions, the missing boundary term has a long heuristic history: in statistical neurodynamics, the noise-enhancement factor of the sign function is twice the Gaussian density of the neuron input at the threshold \cite{amari1988statistical,okada1995hierarchy}, with staircase functions yielding the step heights times the densities at the thresholds \cite{bolle1998parallel}, while Tanaka and Okada \cite{tanaka2005approximate} derived the same factor for interference cancellation in \ac{CDMA} with smooth decisions, computed it offline from density evolution, and expected hard decisions to follow by smoothing, as supported by their simulations and by the generating-functional analysis of \cite{mimura2014generating}.
These analyses are, however, heuristic, and Theorem~\ref{thm:se} proves, for \ac{AMP} and general denoisers with jumps, what \cite{tanaka2005approximate} expected for hard decisions, while the same coefficient appears in \cite{wu2025asymptotic} only for a single quantization step, analyzed by mollification.
Furthermore, \ac{SE} has been claimed for piecewise-Lipschitz denoisers with the almost-everywhere derivative in Stein's lemma \cite{takeuchi2024decentralized,takeuchi2025generalized,takeuchi2025goamp}, a form that misses the boundary term (see Lemma~\ref{lem:stein}), since almost-everywhere differentiability does not suffice for Stein's lemma \cite{hansen2018stein}.\footnote{For $\eta(u)=|u|+\mathbf 1\{u>0\}$, considered in \cite{takeuchi2024decentralized}, this form predicts $\E[\zt\eta(\zt)]=0$, whereas in fact $\E[\zt\eta(\zt)]=\varphi(0)$.}

We highlight that \ac{SE} without smoothness is known in the tensor-program framework \cite{yang2019scaling,yang2020tensor}, whose nonlinearities and test functions need only be polynomially bounded, but whose signal is Gaussian \cite[Cor.~D.18]{yang2019scaling} or a polynomially bounded function of a Gaussian vector \cite[Thm.~F.8]{yang2020tensor}; these results are compared with ours in Remark~\ref{rem:yang}.

In view of all the above, the contributions of this article are summarized as follows:
\begin{itemize}
\item Using the Lipschitz \ac{SE} of \cite{bayati2011dynamics} as a black box, and a crossing bound that keeps the iterates of \ac{AMP} close to those of \ac{AMP} on Lipschitz ramps, we prove in Section~\ref{sec:theorem} that, for real \ac{iid} Gaussian matrices, $\sigma_z^2>0$, a fixed $T$, a prior with finite second moment and denoisers fixed in advance with finitely many jumps (Assumption~\ref{ass}), \ac{AMP} follows \ac{SE}, also for test functions such as symbol errors, if its Onsager coefficient {is the distributional or the \ac{SE}-corrected one, combining the expectation or empirical average of the pointwise derivative, respectively, with the corresponding boundary term} (Theorem~\ref{thm:se}).
\item We propose in Section~\ref{sec:principle} an \acs{SE}-corrected Onsager coefficient (Definition~\ref{def:bcorr}, Theorem~\ref{thm:se}(b)), which adds the boundary term at the noise level estimated from the residual to the empirical average of the pointwise derivative, and requires the known prior, but neither the \ac{SE} trajectory nor the signal, at a negligible cost.
\item Finally, we show that the standard coefficient is inconsistent whenever the boundary term is non-zero, as it is for every slicer, whose correct coefficient is the multilevel form of the heuristic binary factor \cite{tanaka2005approximate,mimura2014generating} (Corollary~\ref{cor:consequences}), and confirm in Section~\ref{sec:numerics} that slicer-\ac{AMP}; i.e., \ac{AMP} with a hard slicer, comes close to Bayes-optimal \ac{AMP} with the proposed coefficient.
\end{itemize}

\emph{Notation:} Boldface lower- and uppercase letters denote vectors and matrices, respectively; $\norm{\cdot}$ and $\norm{\cdot}_{\rm op}$ are the Euclidean and spectral norms; $\av{\bu}\triangleq\frac1K\sum_{k=1}^Ku_k$ is the empirical average of $\bu\in\R^K$, and scalar functions act entrywise; $\mathcal N(m,v)$ is the Gaussian distribution with mean $m$ and variance $v$; $\zt\sim\mathcal N(0,1)$ is independent of everything else, with density $\varphi$; $\mathbf 1\{\cdot\}$ is the indicator function; $\eta(\beta^\pm)$ are the one-sided limits of $\eta:\R\to\R$ at $\beta$, and $\eta'_{\rm ae}$ is its derivative where it exists and $0$ elsewhere; $\mathcal D_s\triangleq\{u:\mathrm{dist}(u,\mathcal D)\le s\}$, with $\mathrm{dist}(u,\mathcal D)\triangleq\inf_{d\in\mathcal D}|u-d|$; $\nu_{\bu}(\mathcal A)\triangleq\frac1K\#\{k:u_k\in\mathcal A\}$, where $\#$ counts elements; and $\mathrm{supp}\,p$ is the support of $p$.

\section{AMP and State Evolution}\label{sec:model}
We consider $N$ noisy linear measurements $\by=\bH\bx+\bz\in\R^N$ of $K$ unknowns $\bx\in\R^K$, where $H_{nk}\sim\mathcal N(0,1/N)$, the unknowns $x_k\sim p$ are \ac{iid} with $E_d\triangleq\E[x_k^2]<\infty$, and $\bz\sim\mathcal N(\mathbf 0,\sigma_z^2\mathbf I)$ with $\sigma_z>0$, all mutually independent.
The load $\alpha\triangleq K/N\in(0,\infty)$ is fixed as $K,N\to\infty$, which we hereafter write as $K\to\infty$, and all problem sizes are defined on one probability space,\footnote{Specifically, $\bx$ and $\bz$ are the first $K$ and $N$ entries of two \ac{iid} sequences, and $\bH=\mathbf G/\sqrt N$, where $\mathbf G$ is the upper-left $N\times K$ block of an infinite array of \ac{iid} $\mathcal N(0,1)$ entries.} to which ``almost surely'' (a.s.) refers.

Given a number of iterations $T$, denoisers $\eta_t:\R\to\R$ and coefficients $b_t$, \ac{AMP} starts at $\bx^0=\br^0=\mathbf 0$, $b_0=0$, and computes, for $t=1,\dots,T$,
\begin{equation}\label{eq:amp}
\begin{aligned}
\br^t&=\by-\bH\bx^{t-1}+\alpha b_{t-1}\br^{t-1},\\
\bxb^t&=\bx^{t-1}+\bH^{\mathsf T}\br^t,\qquad\bx^t=\eta_t(\bxb^t),
\end{aligned}
\end{equation}
with residual $\br^t$ and cavity $\bxb^t$, and with $\br^{T+1}$ and $\bxb^{T+1}$ obtained from the first two steps.

The standard Onsager coefficient is $b_t=\av{\eta'_{t,{\rm ae}}(\bxb^t)}$, and since the denoisers are fixed functions, the noise estimate $\hat\sigma^2_t\triangleq\norm{\br^t}^2/N$ enters only the coefficient of equation~\eqref{eq:bcorr}.
In turn, \ac{SE} is the scalar recursion
\begin{equation}\label{eq:se}
\sbar^2_{t+1}=\sigma_z^2+\alpha\,\E\big[(\eta_t(x+\sbar_t\zt)-x)^2\big],\quad \sbar^2_1=\sigma_z^2+\alpha E_d,
\end{equation}
with $x\sim p$, such that $\sbar_t\ge\sigma_z$.
Let $\mu_t$ denote the law of $(x,x+\sbar_t\zt)$, and recall that a function $\psi:\R^2\to\R$ is pseudo-Lipschitz of order two (PL2) if $|\psi(\mathbf a)-\psi(\mathbf a')|\le C(1+\|\mathbf a\|+\|\mathbf a'\|)\|\mathbf a-\mathbf a'\|$ for some $C<\infty$.
Then, \ac{SE} is said to describe \ac{AMP} at iteration $t$ if, a.s., $\hat\sigma_t^2\to\sbar_t^2$ and, for all PL2 $\psi$ simultaneously,
\begin{equation}\label{eq:se_limit}
\lim_{K\to\infty}\frac1K\sum_{k=1}^K\psi(x_k,\bar x^t_k)=\E\big[\psi(x,x+\sbar_t\zt)\big],
\end{equation}
or, equivalently, if we have that the empirical distribution of $\{(x_k,\bar x^t_k)\}_{k=1}^K$ converges to $\mu_t$ in Wasserstein-2 distance.

\begin{lemma}[{[\citen{bayati2011dynamics}, Thm.~2]}]\label{lem:bm}
Let $E_d>0$, let each $\eta_t$ be Lipschitz and continuously differentiable outside a finite set; e.g., soft thresholding or the posterior mean for a \ac{PAM} prior, and let $b_t=\av{\eta'_{t,{\rm ae}}(\bxb^t)}$. Then \ac{SE} describes \ac{AMP} at every $t\le T+1$, and $b_t\to\E[\eta'_{t,{\rm ae}}(x+\sbar_t\zt)]$ a.s. for $t\le T$.
\end{lemma}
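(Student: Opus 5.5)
This lemma is cited from \cite[Thm.~2]{bayati2011dynamics}, so the plan is to check that our setting matches the hypotheses of that theorem and then translate its conclusions into the notation of \eqref{eq:amp}--\eqref{eq:se_limit}.

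\emph{Step 1: matching the iteration.} The first step is a change of variables. Setting $\bz^t\triangleq\br^{t+1}$ and scaling appropriately, \eqref{eq:amp} becomes the recursion of Bayati and Montanari, $\bx^{t+1}=\eta_t(\bH^{\mathsf T}\bz^t+\bx^t)$ and $\bz^t=\by-\bH\bx^t+\frac1\delta\bz^{t-1}\av{\eta'_{t-1}(\cdot)}$. Here $\delta=N/K=1/\alpha$, so the factor $\alpha b_{t-1}$ is exactly their Onsager term. Their matrix has i.i.d.\ $\mathcal N(0,1/N)$ entries, and their noise and signal are i.i.d.\ with finite second moments, so this matches our model. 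Their state evolution $\tau_{t+1}^2=\sigma_z^2+\frac1\delta\E[(\eta_t(x+\tau_t\zt)-x)^2]$ coincides with \eqref{eq:se} after the shift $\tau_{t-1}=\sbar_t$. The initialization $\bx^0=\mathbf 0$ gives $\sbar_1^2=\sigma_z^2+\alpha E_d$.

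\emph{Step 2: admissibility of the denoisers.} Their Theorem~2 requires Lipschitz $\eta_t$. The derivative is interpreted as the almost-everywhere derivative, and they allow nondifferentiability on a set of measure zero. For the convergence of $b_t$ they need the derivative to be continuous outside a finite set. Lipschitz continuity gives $\eta_t'$ a.e.\ and bounded, and $C^1$ outside a finite set gives the needed continuity. The condition $E_d>0$ guards against degenerate cases; together with $\sigma_z>0$ it gives $\sbar_t>0$. This positivity makes $x+\sbar_t\zt$ absolutely continuous, so the finite exceptional set has probability zero.

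\emph{Step 3: conclusions.} Their Theorem~2 gives almost-sure convergence $\frac1K\sum\psi(x_k,\bar x_k^t)\to\E[\psi(x,x+\sbar_t\zt)]$ for PL2 $\psi$ at every finite $t$. This covers $t\le T+1$, and in particular the cavity $\bxb^{T+1}$ formed from the last two steps. The residual statement $\hat\sigma_t^2\to\sbar_t^2$ comes from their companion result on $\frac1N\norm{\bz^t}^2\to\tau_t^2$ (their Lemma~1(b)/Theorem~1). To get the result "for all PL2 $\psi$ simultaneously", I would pass to Wasserstein-2 convergence using a countable convergence-determining family, as is standard.

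For $b_t$, the function $\eta'_{t,\rm ae}$ is bounded and continuous outside a finite set. Its discontinuity set therefore has zero $\mu_t$-measure, and weak convergence of the empirical law (portmanteau) gives $b_t\to\E[\eta'_{t,\rm ae}(x+\sbar_t\zt)]$.

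\emph{Main obstacle.} The main point needing care is this last step together with the simultaneity claim, since Bayati and Montanari state PL-$\psi$ convergence function by function. I would first try a separability argument: take countably many PL2 test functions dense in the relevant sense, intersect the corresponding null sets, and then extend to all PL2 $\psi$ by uniform integrability from the second-moment convergence.
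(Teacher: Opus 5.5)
Your proposal is correct and is essentially the paper's argument: apply Bayati--Montanari's Theorem~2 (the paper routes it explicitly through the reduction of their Sec.~III-C) to get \eqref{eq:se_limit} a.s.\ for each $\psi$, then pass to all PL2 $\psi$ on one event via Wasserstein-2 convergence obtained from countably many bounded Lipschitz test functions together with $\|\mathbf a\|^2$, and get the limit of $b_t$ from the mapping theorem, since $\eta'_{t,{\rm ae}}$ is bounded and continuous off a finite, $\mu_t$-null set. Two points of precision: because their theorem is stated for converging sequences of instances, you should condition on $(\bx,\bz)$, whose empirical distributions and second moments converge a.s.; and $E_d>0$ is needed for their nondegeneracy condition on the initialization (their eq.~(3.4)), not for positivity of $\sbar_t$, which already follows from $\sbar_t\ge\sigma_z>0$.
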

\begin{proof}
Applying \cite[Thm.~2]{bayati2011dynamics} through the reduction of \cite[Sec.~III-C]{bayati2011dynamics}, conditionally on $(\bx,\bz)$, whose empirical distributions and second moments converge a.s., and with $E_d>0$ as required by \cite[eq.~(3.4)]{bayati2011dynamics}, yields equation~\eqref{eq:se_limit} a.s. for each $\psi$, and thus on one event for all PL2 $\psi$, since Wasserstein-2 convergence on $\R^2$ follows from equation~\eqref{eq:se_limit} for countably many bounded Lipschitz $\psi$ and for $\psi(\mathbf a)=\|\mathbf a\|^2$.
The limit of $b_t$ then follows from the mapping theorem \cite[Thm.~2.7]{billingsley1999convergence}, since $\eta'_{t,{\rm ae}}$ is bounded and continuous outside a finite set, which has probability zero under the limit.
\end{proof}

\section{The Distributional Onsager Coefficient}\label{sec:principle}
We first recall Stein's identity for discontinuous functions.

\begin{lemma}[{[\citen{tibshirani2015degrees}, Lemma~4]}]\label{lem:stein}
Let $\eta:\R\to\R$ be Lipschitz on each open interval between the points $\beta_1<\dots<\beta_m$, with jumps $\Delta_j=\eta(\beta_j^+)-\eta(\beta_j^-)$, and let $\bar x=x+\sigma\zt$ with $\sigma>0$ and $x\sim p$. Then $\bar x$ has the continuous, positive density $f_\sigma(u)=\E[\varphi((u-x)/\sigma)]/\sigma\le1/(\sqrt{2\pi}\,\sigma)$, and
\begin{equation}\label{eq:stein}
\bar b(\sigma)\triangleq\frac{\E[\zt\,\eta(\bar x)]}{\sigma}=\E\big[\eta'_{\rm ae}(\bar x)\big]+\sum_{j=1}^{m}\Delta_j\,f_\sigma(\beta_j).
\end{equation}
\end{lemma}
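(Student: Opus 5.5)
The plan is to condition on $x$ and reduce to a one-dimensional Gaussian integration by parts on each interval of continuity, then integrate over the prior.

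\emph{Density.} By Fubini, $\bar x$ has density $f_\sigma(u)=\E[\varphi((u-x)/\sigma)]/\sigma$. Since $\varphi\le 1/\sqrt{2\pi}$, this gives the stated bound. Continuity follows by dominated convergence, because $\varphi$ is bounded and continuous. Positivity holds because $\varphi>0$ everywhere.

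\emph{Conditional identity.} Fix $x$ and write $g(u)=\varphi((u-x)/\sigma)/\sigma$, so that $g'(u)=-\frac{u-x}{\sigma^2}g(u)$. Then
\begin{equation*}
\E[\zt\,\eta(\bar x)\mid x]/\sigma=\int\eta(u)\frac{u-x}{\sigma^2}g(u)\,du=-\int\eta(u)g'(u)\,du .
\end{equation*}
Split $\R$ into the intervals $(\beta_{j},\beta_{j+1})$, with $\beta_0=-\infty$ and $\beta_{m+1}=\infty$. On each interval $\eta$ is Lipschitz, hence absolutely continuous, so integration by parts is valid. Each interval contributes $\int\eta'_{\rm ae}g$ plus the boundary terms $-\eta(\beta_{j+1}^-)g(\beta_{j+1})+\eta(\beta_j^+)g(\beta_j)$. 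Summing over intervals telescopes the boundary terms into $\sum_j\Delta_j g(\beta_j)$.

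The terms at $\pm\infty$ vanish because $\eta$ grows at most linearly: it is Lipschitz on the two unbounded intervals, and $g$ has Gaussian tails. The one-sided limits $\eta(\beta_j^\pm)$ exist because $\eta$ is Lipschitz on each adjacent interval. Since $\eta'_{\rm ae}$ is bounded by the largest Lipschitz constant, $\int\eta'_{\rm ae}g$ is finite. The resulting conditional identity is
\begin{equation*}
\E[\zt\,\eta(\bar x)\mid x]/\sigma=\E[\eta'_{\rm ae}(\bar x)\mid x]+\sum_j\Delta_j\,\varphi((\beta_j-x)/\sigma)/\sigma .
\end{equation*}

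\emph{Averaging over $x$.} Take expectation over $x\sim p$. The boundary sum becomes $\sum_j\Delta_j f_\sigma(\beta_j)$ by the formula for $f_\sigma$. The only point needing care is integrability of $\zt\,\eta(\bar x)$. This follows from $|\eta(u)|\le A+B|u|$ together with $\E[x^2]<\infty$, which is the paper's standing assumption $E_d<\infty$. With integrability in hand, the tower property and Fubini apply.

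\emph{Main obstacle.} The main obstacle is this integrability/linear-growth step. The lemma only assumes Lipschitz continuity on each piece, so one must check that the finitely many jumps and pieces still yield a global affine bound. That bound is what kills the tail boundary terms and justifies exchanging expectations. Everything else is routine calculus.
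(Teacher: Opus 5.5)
Your proposal is correct and follows essentially the same route as the paper. Like the paper, you condition on $x$, integrate by parts on each interval of continuity using the Gaussian identity $z\varphi(z)=-\varphi'(z)$ (which you write in the $u$-variable as $g'(u)=-\frac{u-x}{\sigma^2}g(u)$), collect the one-sided boundary values into $\sum_j\Delta_j\varphi((\beta_j-x)/\sigma)/\sigma$, discard the tail terms by linear growth, and average over $x$; you additionally verify the density properties and the integrability (via $\E[x^2]<\infty$), which the paper leaves implicit.
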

\begin{proof}
Fix $x$ and let $\gamma_j\triangleq(\beta_j-x)/\sigma$.
On each interval between consecutive $\gamma_j$, $z\mapsto\eta(x+\sigma z)$ is absolutely continuous, such that integration by parts with $z\varphi(z)=-\varphi'(z)$ yields $\sigma\E[\eta'_{\rm ae}(\bar x)\mid x]$ plus the boundary values $\eta(\beta_j^+)\varphi(\gamma_j)-\eta(\beta_j^-)\varphi(\gamma_j)$, where the terms at $\pm\infty$ vanish since $\eta$ grows at most linearly.
Averaging over $x$ then turns $\varphi(\gamma_j)$ into $\sigma f_\sigma(\beta_j)$.
\end{proof}

We refer to $\bar b(\sigma)$ as the \emph{distributional Onsager coefficient}, since equation~\eqref{eq:stein} integrates the distributional derivative of $\eta$, namely $\eta'_{\rm ae}$ plus point masses $\Delta_j$ at the $\beta_j$, against $f_\sigma$, and write $\bar b_t\triangleq\bar b(\sbar_t)$ for $\eta=\eta_t$.
Its second term, $B(\sigma)\triangleq\sum_j\Delta_jf_\sigma(\beta_j)$, is hereafter referred to as the \emph{boundary term}, with $B_t\triangleq B(\sbar_t)$.
Consider, for instance, the hard slicer onto a \ac{PAM} alphabet $\mathcal M=\{a_1<\dots<a_M\}$; i.e., $\eta(u)=\arg\min_{a\in\mathcal M}|u-a|$, whose jumps $\Delta_i=a_{i+1}-a_i$ are located at the thresholds $\beta_i=(a_i+a_{i+1})/2$, and whose pointwise derivative is $\eta'_{\rm ae}=0$.
The standard coefficient is then zero, such that \ac{AMP} reduces to interference cancellation without Onsager correction, whereas $\bar b(\sigma)=\sum_{i=1}^{M-1}(a_{i+1}-a_i)f_\sigma(\beta_i)>0$, which for $\{\pm1\}$ with equal probabilities reduces to $\bar b(\sigma)=2\varphi(1/\sigma)/\sigma$; i.e., the factor of \cite{amari1988statistical,okada1995hierarchy,tanaka2005approximate}.
Since $\bar b_t$ requires the \ac{SE} trajectory, computed offline as in \cite{tanaka2005approximate}, we propose the following coefficient, which \ac{AMP} computes online.

\begin{table}[H]
\caption{Onsager coefficients for discontinuous denoisers}\label{tab:coef}
\vspace{-2ex}
\centering\footnotesize
\setlength{\tabcolsep}{3pt}
\begin{tabular}{@{}l|l|l|>{\raggedright\arraybackslash}p{3.3cm}@{}}
\toprule
Coefficient & Formula & Computed from & Result\\
\midrule
standard & $\av{\eta'_{t,{\rm ae}}(\bxb^t)}$ & $\bxb^t$, online & inconsist. if $B_t\neq0$ (Cor.~\ref{cor:consequences})\\
distributional & $\bar b_t$, eq.~\eqref{eq:stein} & $\sbar_t$, offline & \acs{SE} holds (Thm.~\ref{thm:se}(a))\\
\acs{SE}-corrected & $\hat b_t$, eq.~\eqref{eq:bcorr} & $\bxb^t$, $\br^t$, $p$, online & \acs{SE} holds (Thm.~\ref{thm:se}(b))\\
\bottomrule
\end{tabular}
\vspace{-1ex}
\end{table}

\begin{definition}[\acs{SE}-corrected Onsager coefficient]\label{def:bcorr}
Let $\eta_t$ be as in Lemma~\ref{lem:stein}, with $J_t$ points $\beta_{t,j}$ and jumps $\Delta_{t,j}$. The \emph{\acs{SE}-corrected Onsager coefficient} is defined as
\begin{equation}\label{eq:bcorr}
\hat b_t\triangleq\av{\eta'_{t,{\rm ae}}(\bxb^t)}+\sum_{j=1}^{J_t}\Delta_{t,j}\,f_{\hat\sigma_t}(\beta_{t,j}),\qquad \hat\sigma_t^2=\frac{\norm{\br^t}^2}N,
\end{equation}
where $f_{\hat\sigma_t}$ is the density of Lemma~\ref{lem:stein}, computed from the known prior $p$, and the boundary term is set to $0$ if $\hat\sigma_t=0$.
\end{definition}

In equation~\eqref{eq:bcorr}, the boundary term is thus evaluated at $\hat\sigma_t$ instead of $\sbar_t$, such that \ac{SE} enters only through the Gaussian form of the cavity, hence the name.\footnote{Smoothing, in contrast, hides the boundary term rather than removing it: if each jump is replaced by a linear ramp of width $2\epsilon$, the standard coefficient of the ramped denoiser contains $(\Delta_j+\mathcal O(\epsilon))\,\nu_{\bxb^t}((\beta_j-\epsilon,\beta_j+\epsilon))/(2\epsilon)$; i.e., a histogram estimate of the cavity density at $\beta_j$, as in DoF estimators for hard thresholding \cite{deledalle2013score}, whereas equation~\eqref{eq:bcorr} uses the \ac{SE} density instead.}
We emphasize that the prior must be known (as for the Bayes-optimal denoiser), since with another distribution in place of $p$, the limit of equation~\eqref{eq:bcorr} differs from $\bar b_t$ in general.
For a prior with finitely many atoms, the boundary term costs $\mathcal O(J_t\,\#\mathrm{supp}\,p)$ operations and $\hat\sigma_t$ costs $\mathcal O(N)$, against $\mathcal O(NK)$ for the matrix--vector products.
For the slicer and the uniform prior on $\mathcal M$, in particular, equation~\eqref{eq:bcorr} is $\hat b_t=\frac1{M\hat\sigma_t}\sum_{i=1}^{M-1}\sum_{c=1}^M(a_{i+1}-a_i)\varphi((\beta_i-a_c)/\hat\sigma_t)$.
The three coefficients are summarized in Table~\ref{tab:coef}.

\section{State Evolution for Discontinuous Denoisers}\label{sec:theorem}
\begin{assumption}\label{ass}
There are $J\in\mathbb N$, $L,\Delta<\infty$ and $g>0$ such that:
(i) $E_d>0$;
(ii) for each $t\le T$, there are $J_t\le J$ points $\beta_{t,1}<\dots<\beta_{t,J_t}$ with gaps $\beta_{t,j+1}-\beta_{t,j}\ge g$, such that $\eta_t$ is continuously differentiable with $|\eta_t'|\le L$ on each open interval between them;
(iii) the jumps satisfy $|\Delta_{t,j}|\le\Delta$, and $\eta_t(\beta_{t,j})$ lies between $\eta_t(\beta_{t,j}^-)$ and $\eta_t(\beta_{t,j}^+)$.
\end{assumption}

Note that points with $\Delta_{t,j}=0$ are kinks, such that Assumption~\ref{ass} contains the denoisers of Lemma~\ref{lem:bm}, and also covers \ac{PAM} slicers with any tie-breaking rule, quantizers with finitely many levels and hard thresholding $\eta(u)=u\,\mathbf 1\{|u|>\lambda\}$ with fixed $\lambda>0$.
Our main result then reads as follows, with $\cJ_t\triangleq\{\beta_{t,1},\dots,\beta_{t,J_t}\}$ and its closed $s$-neighbourhood $\cJ_{t,s}$.

\begin{theorem}\label{thm:se}
Let Assumption~\ref{ass} hold, and run \ac{AMP} \eqref{eq:amp} with (a) $b_t=\bar b_t$ or (b) $b_t=\hat b_t$. Then, a.s., $b_t\to\bar b_t$ for $t\le T$, and \ac{SE} describes \ac{AMP} at every $t\le T+1$. Moreover, a.s., \eqref{eq:se_limit} holds at every $t\le T+1$ simultaneously for all Borel $\psi$ with $|\psi(\mathbf a)|\le C(1+\|\mathbf a\|^2)$ whose set of discontinuity points $D_\psi$ satisfies $\mu_t(D_\psi)=0$.
Finally, let $\mathcal E=\{1/m:m\in\mathbb N,\ m>2/g\}$, and, for $\epsilon\in\mathcal E$, let $\bxb^t_\epsilon$ and $\br^t_\epsilon$ be the iterates of \ac{AMP} run on the same $(\bH,\bx,\bz)$ with the Lipschitz ramps $\eta_{t,\epsilon}$ of Lemma~\ref{lem:ramp} and their standard coefficients $b_{t,\epsilon}=\av{\eta'_{t,\epsilon}(\bxb^t_\epsilon)}$.

Then, a.s.,
\begin{equation}\label{eq:close}
\lim_{\substack{\epsilon\downarrow0\\ \epsilon\in\mathcal E}}\,\limsup_{K\to\infty}\,\max_{1\le t\le T+1}\bigg[\frac{\norm{\bxb^t-\bxb^t_\epsilon}}{\sqrt K}+\frac{\norm{\br^t-\br^t_\epsilon}}{\sqrt N}\bigg]=0.
\end{equation}
\end{theorem}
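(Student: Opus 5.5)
We compare \ac{AMP} with the jumpy denoisers $\eta_t$ against \ac{AMP} with the ramps $\eta_{t,\epsilon}$ of Lemma~\ref{lem:ramp} and standard coefficients. For the ramps, Lemma~\ref{lem:bm} applies directly: \ac{SE} with variances $\sbar_{t,\epsilon}^2$ describes the ramp iterates, and $b_{t,\epsilon}\to\E[\eta'_{t,\epsilon}(x+\sbar_{t,\epsilon}\zt)]$. By Lemma~\ref{lem:stein}, this limit is the distributional coefficient of the ramp. As $\epsilon\downarrow0$, it converges to $\bar b_t$. Indeed, the ramp slope $\approx\Delta_j/(2\epsilon)$ integrated over $(\beta_j-\epsilon,\beta_j+\epsilon)$ against the continuous density $f_\sigma$ gives $\Delta_jf_\sigma(\beta_j)$. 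Since $\eta_{t,\epsilon}\to\eta_t$ pointwise off $\cJ_t$ with a linear-growth envelope, dominated convergence and induction on $t$ in \eqref{eq:se} give $\sbar_{t,\epsilon}\to\sbar_t$. It also gives continuity of $\sigma\mapsto f_\sigma(\beta)$ and $\sigma\mapsto\E[\eta'_{\rm ae}(x+\sigma\zt)]$, since the density is bounded by $1/(\sqrt{2\pi}\sigma)$ and $\sigma\ge\sigma_z>0$.

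The core step is \eqref{eq:close}, proved by induction on $t$. Let $\delta_t\triangleq\norm{\bxb^t-\bxb^t_\epsilon}/\sqrt K$ and $\rho_t\triangleq\norm{\br^t-\br^t_\epsilon}/\sqrt N$. Subtracting the two recursions and using $\norm{\bH}_{\rm op}\to1+\sqrt\alpha$ a.s. gives
\[
\rho_{t+1}\le C\Big(\tfrac{\norm{\eta_t(\bxb^t)-\eta_{t,\epsilon}(\bxb^t_\epsilon)}}{\sqrt K}+|b_t-b_{t,\epsilon}|\tfrac{\norm{\br^t_\epsilon}}{\sqrt N}+|b_t|\rho_t\Big),
\]
and similarly $\delta_{t+1}\le C(\cdots+\rho_{t+1})$. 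Split the denoiser difference as $\eta_t(\bxb^t)-\eta_t(\bxb^t_\epsilon)$ plus $\eta_t(\bxb^t_\epsilon)-\eta_{t,\epsilon}(\bxb^t_\epsilon)$.

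The second piece is supported on $\cJ_{t,\epsilon}$ and bounded by $\Delta+2L\epsilon$ there. Its squared norm over $K$ is thus $\lesssim\nu_{\bxb^t_\epsilon}(\cJ_{t,\epsilon})$. By the ramp \ac{SE} and a Lipschitz indicator majorant, this tends to $\mathcal O(J\epsilon/\sigma_z)$, which vanishes as $\epsilon\downarrow0$.

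The first piece is the \emph{crossing bound}, and it is the main obstacle. For any $s>0$, split the entries into two groups. For entries with $|\bar x^t_k-\bar x^t_{\epsilon,k}|\le s$ and $\bar x^t_{\epsilon,k}\notin\cJ_{t,s}$, no jump lies between the two points, so the difference is at most $L|\bar x^t_k-\bar x^t_{\epsilon,k}|$. The remaining entries have difference at most $\Delta J+L|\cdot|$, and their number is at most $K\nu_{\bxb^t_\epsilon}(\cJ_{t,s})+K\delta_t^2/s^2$ by Chebyshev. Hence
\[
\tfrac1K\norm{\eta_t(\bxb^t)-\eta_t(\bxb^t_\epsilon)}^2\lesssim L^2\delta_t^2+(\Delta J)^2\big(\nu_{\bxb^t_\epsilon}(\cJ_{t,s})+\delta_t^2/s^2\big).
\]
Take $\limsup_K$, then $\epsilon\downarrow0$ (the induction hypothesis makes $\delta_t\to0$), then $s\downarrow0$; the term $\nu\to\mathcal O(Js/\sigma_z)$ by the ramp \ac{SE} and the density bound. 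This leaves the coefficient term. In case (a), $|b_t-b_{t,\epsilon}|\to|\bar b_t-\lim b_{t,\epsilon}|$, which vanishes as $\epsilon\downarrow0$. In case (b), $\hat\sigma_t$ is within $\rho_t$ of $\hat\sigma_{t,\epsilon}\to\sbar_{t,\epsilon}$, so the boundary term converges by continuity of $f_\sigma$ in $\sigma$. The empirical average $\av{\eta'_{t,{\rm ae}}(\bxb^t)}$ differs from $\av{\eta'_{t,{\rm ae}}(\bxb^t_\epsilon)}$ only on entries that cross a kink or move by more than $s$. Handling this requires uniform continuity of $\eta'_t$, which I would obtain by approximating $\eta'_t$ with a continuous function in $L^1(f_\sigma)$. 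This gives $b_t\to\bar b_t$ and closes the induction, including the statement $\hat\sigma_t\to\sbar_t$ via $\rho_t$.

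Finally, \eqref{eq:close} transfers \eqref{eq:se_limit}. For PL2 $\psi$, the difference of the empirical averages is at most $C(1+\text{second moments})\delta_t$, and $\mu_{t,\epsilon}\to\mu_t$ in $W_2$. Taking limits in the order $K$ then $\epsilon$ gives Wasserstein-2 convergence to $\mu_t$ on a single a.s.\ event, since $\mathcal E$ is countable. The extension to Borel $\psi$ with $\mu_t(D_\psi)=0$ and quadratic growth then follows from the mapping theorem plus uniform integrability of second moments, which $W_2$ convergence supplies.
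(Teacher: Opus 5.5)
Your proposal follows the paper's proof closely. It compares with AMP on the Lipschitz ramps with their standard coefficients and treats those via Lemma~\ref{lem:bm} as a black box. It proves $\sbar_{t,\epsilon}\to\sbar_t$ and $\bar b_{t,\epsilon}\to\bar b_t$, uses a crossing bound that combines the tube fraction $\nu_{\bxb^t_\epsilon}(\cJ_{t,s})$ with a Chebyshev term, runs an induction on the distances, and transfers the result through the mapping theorem. The differences are cosmetic:
\begin{itemize}
\item you obtain $\bar b_{t,\epsilon}\to\bar b_t$ by averaging the ramp slope against $f_\sigma$ over each window, whereas the paper uses Stein's identity with $\zt\eta_{t,\epsilon}$;
\item you use $\Delta J$ in place of $L'=L+\Delta/g$;
\item you let $s\to0$ after $\epsilon\to0$, whereas the paper chooses $s=(\bar e^t)^{2/3}$.
\end{itemize}

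One step in case (b) does not work as written. You claim that $\av{\eta'_{t,{\rm ae}}(\bxb^t)}$ and $\av{\eta'_{t,{\rm ae}}(\bxb^t_\epsilon)}$ differ only on entries that cross a kink or move by more than $s$. This is false, because $\eta'_{t,{\rm ae}}$ varies inside each interval. Your proposed fix, an approximation of $\eta'_{t,{\rm ae}}$ by a continuous function in $L^1(f_\sigma)$, does not close the argument either. It controls the error under the limit law, but not under the empirical law of $\bxb^t$, and that law is exactly what has not yet been identified.

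You do not need uniform continuity at all. At step $t$ the induction hypothesis gives $\limsup_K\delta_t\to0$ as $\epsilon\to0$. Your own final-paragraph transfer then already yields, at this same $t$, convergence of the empirical law of $(\bx,\bxb^t)$ to $\mu_t$. Since $\eta'_{t,{\rm ae}}$ is bounded and continuous off the $\mu_t$-null set $\R\times\cJ_t$, the mapping theorem gives $\av{\eta'_{t,{\rm ae}}(\bxb^t)}\to\E[\eta'_{t,{\rm ae}}(x+\sbar_t\zt)]$. Combined with $\hat\sigma_t\to\sbar_t$ and Lemma~\ref{lem:stein}, this gives $b_t\to\bar b_t$.

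This is how the paper orders the argument: its Step~4 is applied inside the induction of Step~5. So the transfer to \eqref{eq:se_limit} must be carried out at each $t$ before you bound step $t+1$, not only at the end. A localized argument would also work: uniform continuity on compacts at distance at least $s$ from $\cJ_t$, plus tube, tail and Chebyshev fractions. It is simply unnecessary.
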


Since $x+\sbar_t\zt$ has a density, the condition $\mu_t(D_\psi)=0$ holds whenever $D_\psi\subset(\mathcal A\times\R)\cup(\R\times\mathcal N)$, with $\Pr(x\in\mathcal A)=0$ and $\mathcal N$ of zero length; e.g., $D_\psi\subset\R\times\cJ_t$ for $\psi(x,\bar x)=(\eta_t(\bar x)-x)^2$.

\begin{corollary}\label{cor:consequences}
Under the assumptions of Theorem~\ref{thm:se}, a.s.: (i) for $t\le T$, $\frac1K\norm{\bx^t-\bx}^2\to\E[(\eta_t(x+\sbar_t\zt)-x)^2]$, and, if $\mathrm{supp}\,p\subset\mathcal M$ and $\eta$ is the slicer onto $\mathcal M$, the \ac{SER} $\frac1K\#\{k:\eta(\bar x^{T+1}_k)\neq x_k\}$ converges to $\Pr(\eta(x+\sbar_{T+1}\zt)\neq x)$; (ii) for $t\le T$, $\av{\eta'_{t,{\rm ae}}(\bxb^t)}\to\bar b_t-B_t$, such that the standard coefficient is inconsistent whenever $B_t\neq0$, where $B_t=\bar b_t>0$ for a slicer.
\end{corollary}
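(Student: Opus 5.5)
The statement to prove is Corollary~\ref{cor:consequences}. I would derive it directly from Theorem~\ref{thm:se}, working on the a.s.\ event on which \eqref{eq:se_limit} holds at every $t\le T+1$ for all Borel $\psi$ with quadratic growth and $\mu_t(D_\psi)=0$.

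For part (i), first MSE: I take $\psi(x,\bar x)=(\eta_t(\bar x)-x)^2$.
\begin{itemize}
\item \emph{Growth.} By Assumption~\ref{ass}(ii)--(iii), $\eta_t$ has slope at most $L$ between at most $J$ points, and jumps at most $\Delta$. Hence $|\eta_t(u)|\le|\eta_t(0)|+L|u|+J\Delta$, so $|\psi(\mathbf a)|\le C(1+\|\mathbf a\|^2)$.
\item \emph{Discontinuities.} $D_\psi\subset\R\times\cJ_t$, and $\mu_t(\R\times\cJ_t)=0$ because $x+\sbar_t\zt$ has a density (Lemma~\ref{lem:stein}, with $\sbar_t\ge\sigma_z>0$).
\item \emph{Conclusion.} Since $\bx^t=\eta_t(\bxb^t)$, the empirical average of $\psi$ is $\frac1K\norm{\bx^t-\bx}^2$, and \eqref{eq:se_limit} gives the claimed limit.
\end{itemize}

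Next, the SER: I take $\psi(x,\bar x)=\mathbf 1\{\eta(\bar x)\neq x\}$, which is bounded and Borel.
\begin{itemize}
\item At a point $(x_0,u_0)$ with $u_0\notin\cJ$, $\eta$ is locally constant near $u_0$ (the slicer is piecewise constant).
\item If also $x_0\in\mathcal M$, then $\{x\neq a\}$ is locally constant in $x$ around the discrete point $x_0$ for each $a\in\mathcal M$: for $x_0\neq a$ this holds near $x_0$, and for $x_0=a$ it fails only at $x_0$ itself.
\item So I should be careful: $\mathbf 1\{x\neq a\}$ is discontinuous at $x=a$. Still, $D_\psi\subset(\R\times\cJ)\cup(\mathcal M\times\R)$, and $\mu_{T+1}(\mathcal M\times\R)$ need not vanish, since $\Pr(x\in\mathcal M)=1$.
\item Fix: replace $\psi$ by $\tilde\psi(x,\bar x)=\mathbf 1\{|\eta(\bar x)-x|\ge\delta\}$ with $\delta=\tfrac12\min_i(a_{i+1}-a_i)$. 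This agrees with $\psi$ whenever $x\in\mathcal M$, hence for all $k$ a.s.\ and $\mu_{T+1}$-a.e.
\item The discontinuities of $\tilde\psi$ lie where $\bar x\in\cJ$ or $|\eta(\bar x)-x|=\delta$. The latter set has $x\in\mathcal M\pm\delta$ locally, which is disjoint from $\mathcal M$, so it is $\mu_{T+1}$-null.
\item Apply the theorem at $t=T+1$ to $\tilde\psi$.
\end{itemize}

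For part (ii), I take $\psi(x,\bar x)=\eta'_{t,{\rm ae}}(\bar x)$. It is bounded by $L$ and continuous off $\R\times\cJ_t$, which is $\mu_t$-null, so $\av{\eta'_{t,{\rm ae}}(\bxb^t)}\to\E[\eta'_{t,{\rm ae}}(x+\sbar_t\zt)]$. Lemma~\ref{lem:stein} identifies this limit as $\bar b_t-B_t$.

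Inconsistency then follows: by Theorem~\ref{thm:se} the correct coefficients converge to $\bar b_t$, and the standard one has a different limit whenever $B_t\neq0$. For the slicer, $\eta'_{\rm ae}=0$, so $\bar b_t=B_t=\sum_i(a_{i+1}-a_i)f_{\sbar_t}(\beta_i)>0$, by positivity of $f_\sigma$ and of the gaps.

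The only delicate step is the $D_\psi$ null condition for the symbol-error indicator, handled by the $\delta$-margin modification above; everything else is direct substitution into Theorem~\ref{thm:se}.
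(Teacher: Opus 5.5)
Your proof is correct and follows the paper's argument. You apply the extended test-function clause of Theorem~\ref{thm:se} to $(\eta_t(\bar x)-x)^2$, to a surrogate of the symbol-error indicator, and to $\eta'_{t,{\rm ae}}(\bar x)$, and identify the last limit via \eqref{eq:stein}. The only difference is the surrogate. The paper uses $\mathbf 1\{\eta(\bar x)\neq\eta(x)\}$, which equals the error indicator on $\mathcal M$ since $\eta(x)=x$ there, and whose discontinuities lie in $(\cJ\times\R)\cup(\R\times\cJ)$. Your margin version $\mathbf 1\{|\eta(\bar x)-x|\ge\delta\}$ reaches the same conclusion by moving the $x$-discontinuities to the non-atoms $\mathcal M\pm\delta$.
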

\begin{proof}
The result follows from Theorem~\ref{thm:se} applied to $(\eta_t(\bar x)-x)^2$, which grows quadratically and is discontinuous only on $\R\times\cJ_t$; to $\mathbf 1\{\eta(\bar x)\neq\eta(x)\}$, which equals the error indicator on $\mathrm{supp}\,p$ and is discontinuous only on $(\cJ\times\R)\cup(\R\times\cJ)$, a $\mu_{T+1}$-null set since the thresholds are not atoms of $p$; and to $\eta'_{t,{\rm ae}}(\bar x)$, whose limit is $\bar b_t-B_t$ by equation~\eqref{eq:stein}.
\end{proof}

We emphasize that no other recursion is claimed for the standard coefficient, which in fact fails for the slicer (see Section~\ref{sec:numerics}).

\begin{remark}[Relation to tensor programs]\label{rem:yang}
Theorem~\ref{thm:se} and \cite{yang2019scaling,yang2020tensor} overlap, but neither contains the other.
On the one hand, \cite[Cor.~D.18]{yang2019scaling} assumes Gaussian limits for the empirical distributions of the signal and the noise, mentioning priors with all moments finite as possible with more effort, and in \cite[Thm.~F.8]{yang2020tensor} a non-Gaussian signal; e.g., \ac{PAM}, enters as a polynomially bounded function of a Gaussian vector, in exchange for nonlinearities and test functions that need only be polynomially bounded.
On the other hand, Theorem~\ref{thm:se} requires of $p$ only $0<E_d<\infty$, and thus covers; e.g., heavy-tailed priors with finite variance.
In addition, through \cite[Sec.~III-C]{bayati2011dynamics}, the coefficient of \cite[Cor.~D.18]{yang2019scaling} reads $\av{(\bxb^t-\bx)(\eta_t(\bxb^t)-\bx)}/\sbar_t^2$ in our notation, which involves the unknown $\bx$ and $\sbar_t^2$ and converges to $\bar b_t$, while the Onsager term in the limit of \cite[Thm.~F.8]{yang2020tensor} is defined through Gaussian correlations \cite[Remark~F.7]{yang2020tensor}, which, for the denoisers of Assumption~\ref{ass}, equal the expectations of distributional derivatives in equation~\eqref{eq:stein} (cf.\ \cite[fn.~10]{yang2019scaling}).
Hence, case (a), with a coefficient fixed in advance, is within the scope of \cite[Thm.~F.8]{yang2020tensor} for such signals, whereas $\hat b_t$ depends on the iterates and falls under \cite[Thms.~F.9--F.10]{yang2020tensor}, whose parameter-control and rank-stability conditions we have not verified for equation~\eqref{eq:bcorr}.
Finally, our proof uses Lemma~\ref{lem:bm} as a black box instead of Gaussian conditioning, and yields equation~\eqref{eq:close}.
\end{remark}

\begin{remark}[Scope]
The noise floor $\sigma_z>0$, which bounds the \ac{SE} densities by $f_{\max}\triangleq(2\pi\sigma_z^2)^{-1/2}$, can be replaced by $\min_{t\le T+1}\sbar_t>0$, and no bounded support of $p$ is needed, such that Gaussian and Bernoulli--Gaussian priors with fixed thresholds are covered, but thresholds set from $\hat\sigma_t$, which make $\eta_t$ data-dependent, are not.
Since the proof uses $\bH$ only through Lemma~\ref{lem:bm} and $\norm{\bH}_{\rm op}\to1+\sqrt\alpha$ a.s. \cite{geman1980limit}, and $b_t$ only through $b_t\to\bar b_t$, the transfer applies wherever Lipschitz \ac{SE} holds as in Lemma~\ref{lem:bm} with bounded $\norm{\bH}_{\rm op}$, for any such coefficients.
Unlike the iterated limit for smoothed \ac{AMP} \cite{zheng2017does,ma2019optimization}, in which $K\to\infty$ before $\epsilon\to0$, Theorem~\ref{thm:se} describes the discontinuous iteration itself, for which the order of the limits in equation~\eqref{eq:close} matters: at fixed $K$, if no cavity entry lies on a jump point, \ac{AMP} with the standard coefficient coincides with its ramped version for all small enough $\epsilon$, and need not follow \ac{SE} (see Section~\ref{sec:numerics}).
\end{remark}

\begin{figure*}[!t]
    \vspace{1ex}
\centering
\includegraphics[width=\textwidth]{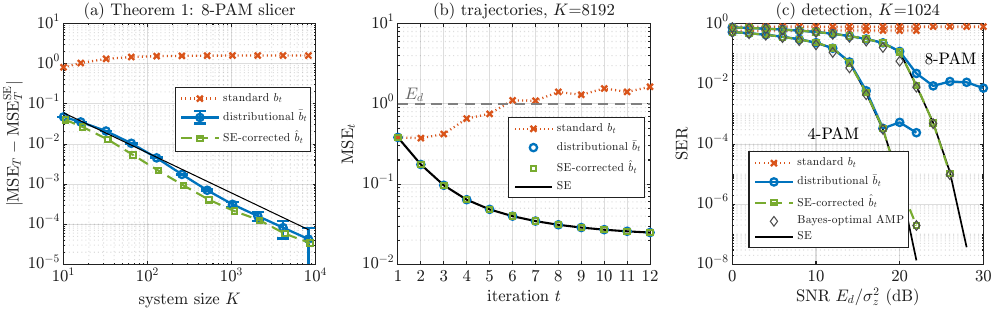}
\vspace{-4.5ex}
\caption{Slicer-\acs{AMP}, $\alpha=0.5$; markers and colours are the same in all panels. (a) Gap $|\mathrm{MSE}_T-\mathrm{MSE}_T^{\rm SE}|$ between the empirical \acs{MSE} $\frac1K\norm{\bx^T-\bx}^2$ after $T=12$ iterations, averaged over realizations, and its \acs{SE} value, against $K$ (8-\acs{PAM}, 20~dB, at least $2^{25}$ symbols per point; error bars: $\pm2$ standard errors of the $\bar b_t$ curve; $\hat b_t$ curve shifted by $5\%$ in $K$; thin line: $\propto K^{-1}$). (b) The same runs at $K=8192$: $\mathrm{MSE}_t$ against its \acs{SE} value. (c) \acs{SER} of the slicer decisions on $\bxb^{T+1}$ for 4- and 8-\acs{PAM} ($K=1024$, $N=2048$, $T=20$), and of Bayes-optimal \acs{AMP}; black lines: \acs{SE} of slicer-\acs{AMP}.}\label{fig:main}
\vspace{-3.5ex}
\end{figure*}

\section{Proof of Theorem~\ref{thm:se}}\label{sec:proof}
Since a small perturbation of a cavity entry near a jump point can change the denoiser output by up to $\Delta$, we compare \ac{AMP} with the same iteration on the Lipschitz ramps of Lemma~\ref{lem:ramp}, to which Lemma~\ref{lem:bm} applies.
The two iterations differ by $\mathcal O(1)$ only in entries that cross a jump, which the crossing bound of Lemma~\ref{lem:cross} controls by the fraction of cavity entries within distance $s$ of a jump point, which is $\mathcal O(s)$ since the \ac{SE} law of the cavity has a bounded density, plus a Chebyshev term, while their Onsager coefficients are close because the \ac{SE} coefficient of the ramp converges to $\bar b_t$, boundary term included (Step~1).

To that end, let constants $C,C_1,\dots$ in Steps~1--3 depend on $T,\alpha,\sigma_z,p,J,L,\Delta,g$ and the $\eta_t$, but not on $K$ or $\epsilon$.
For $\sigma\ge\sigma_z$, we have $f_\sigma\le f_{\max}$ by Lemma~\ref{lem:stein}, and since $\cJ_{t,s}$ is a union of at most $J$ intervals of length $2s$, it follows that
\begin{equation}\label{eq:tube}
\Pr(x+\sigma\zt\in\cJ_{t,s})\le2Jsf_{\max},\qquad s>0.
\end{equation}

\begin{lemma}[Crossing bound]\label{lem:cross}
Under Assumption~\ref{ass}, with $L'=L+\Delta/g$, for $t\le T$ and $u,v\in\R$,
\begin{equation}\label{eq:jumpbound}
|\eta_t(v)-\eta_t(u)|\le L'|v-u|+\Delta\,\mathbf 1\{\mathrm{dist}(u,\cJ_t)\le|v-u|\}.
\end{equation}
Hence, for $\bu,\bv\in\R^K$, $e^2=\frac1K\norm{\bv-\bu}^2$ and $s>0$,
\begin{equation}\label{eq:flip}
\tfrac1K\norm{\eta_t(\bv)-\eta_t(\bu)}^2\le2L'^2e^2+2\Delta^2\big[\nu_{\bu}(\cJ_{t,s})+e^2/s^2\big].
\end{equation}
\end{lemma}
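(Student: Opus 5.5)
The plan is to prove the pointwise bound \eqref{eq:jumpbound} by a case distinction on whether the segment between $u$ and $v$ contains a jump point. Then \eqref{eq:flip} follows by squaring, averaging over entries, and applying a Markov/Chebyshev argument to the entries with large displacement.

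\emph{Case 1: $\mathrm{dist}(u,\cJ_t)>|v-u|$.} The closed segment $[u\wedge v,u\vee v]$ contains no point of $\cJ_t$, so it lies inside one open interval on which $\eta_t$ is $C^1$ with $|\eta_t'|\le L$. The mean value theorem gives $|\eta_t(v)-\eta_t(u)|\le L|v-u|\le L'|v-u|$.

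\emph{Case 2: $\mathrm{dist}(u,\cJ_t)\le|v-u|$.} Here the indicator equals one. The aim is to show $|\eta_t(v)-\eta_t(u)|\le L|v-u|+\Delta+(\text{extra})$ with the extra term absorbed into $(\Delta/g)|v-u|$. Let $n$ be the number of jump points in the closed segment between $u$ and $v$. Since consecutive points are at least $g$ apart, $n\le 1+|v-u|/g$.

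Walk from $u$ to $v$ through the jump points. On each open sub-interval the variation is at most $L$ times its length. At each jump point the function value moves by at most $|\Delta_{t,j}|\le\Delta$, and the between-ness condition (iii) covers the case where $u$ or $v$ is itself a jump point. The one-sided limits at the ends of each sub-interval are reached by continuity of $\eta_t$ on the closure of each piece, which holds because $\eta_t'$ is bounded there. This gives
\[
|\eta_t(v)-\eta_t(u)|\le L|v-u|+n\Delta\le L|v-u|+\Delta+(\Delta/g)|v-u|,
\]
which is \eqref{eq:jumpbound} with $L'=L+\Delta/g$. The main care point is handling endpoints lying exactly on jump points via (iii).

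For \eqref{eq:flip}, apply \eqref{eq:jumpbound} entrywise with $u=u_k$, $v=v_k$ and use $(a+b)^2\le2a^2+2b^2$:
\[
\tfrac1K\norm{\eta_t(\bv)-\eta_t(\bu)}^2\le 2L'^2e^2+\tfrac{2\Delta^2}{K}\#\{k:\mathrm{dist}(u_k,\cJ_t)\le|v_k-u_k|\}.
\]
Split the counted indices according to whether $|v_k-u_k|\le s$ or $|v_k-u_k|>s$:
\begin{itemize}
\item If $|v_k-u_k|\le s$, then $u_k\in\cJ_{t,s}$, so these indices contribute at most $\nu_{\bu}(\cJ_{t,s})$.
\item If $|v_k-u_k|>s$, Chebyshev gives $\frac1K\#\{k:|v_k-u_k|>s\}\le\frac{1}{Ks^2}\sum_k(v_k-u_k)^2=e^2/s^2$.
\end{itemize}
Adding the two contributions gives \eqref{eq:flip}. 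No step is a real obstacle; the only subtlety is the jump counting in Case 2, which is where the gap assumption $g$ and the between-ness condition (iii) are used.
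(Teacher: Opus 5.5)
Your proposal is correct and follows the paper's proof essentially step by step. Your walk along the segment counts at most $1+|v-u|/g$ jump points and uses Assumption~\ref{ass}(iii) when an endpoint is a jump point, which gives \eqref{eq:jumpbound}; your split into $|v_k-u_k|\le s$ and $|v_k-u_k|>s$ is the paper's pointwise bound $\mathbf 1\{\mathrm{dist}(u_k,\cJ_t)\le|d_k|\}\le\mathbf 1\{u_k\in\cJ_{t,s}\}+d_k^2/s^2$ averaged over $k$.
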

\begin{proof}
On the closed interval $I$ between $u$ and $v$, $\eta_t$ is $L$-Lipschitz between jump points and, by Assumption~\ref{ass}(iii), varies by at most $|\Delta_{t,j}|$ at each jump point in $I$. If $I$ contains a jump point, then $\mathrm{dist}(u,\cJ_t)\le|v-u|$, and $I$ contains at most $1+|v-u|/g$ of them, which yields \eqref{eq:jumpbound}. In turn, \eqref{eq:flip} follows from squaring \eqref{eq:jumpbound}, using $\mathbf 1\{\mathrm{dist}(u_k,\cJ_t)\le|d_k|\}\le\mathbf 1\{u_k\in\cJ_{t,s}\}+d_k^2/s^2$ with $d_k=v_k-u_k$, and averaging over $k$.
\end{proof}

\begin{lemma}[Lipschitz ramps]\label{lem:ramp}
For $t\le T$ and $0<\epsilon<g/2$, let $\eta_{t,\epsilon}$ equal $\eta_t$ outside $\cW_{t,\epsilon}=\cJ_{t,\epsilon}$ and be affine on each interval of $\cW_{t,\epsilon}$, continuous at its endpoints. Then $\eta_{t,\epsilon}$ is Lipschitz with constant $L+\Delta/(2\epsilon)$ and continuously differentiable outside a finite set, and $|\eta_{t,\epsilon}-\eta_t|\le\Delta'=2\Delta+2Lg$ on $\cW_{t,\epsilon}$.
\end{lemma}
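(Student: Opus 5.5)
The proof is a direct piecewise construction, and the only thing to check is that the $\epsilon$-windows around different jump points do not interact.

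\emph{Structure of the windows.} Since $\epsilon<g/2$ and consecutive jump points satisfy $\beta_{t,j+1}-\beta_{t,j}\ge g$, the closed intervals $[\beta_{t,j}-\epsilon,\beta_{t,j}+\epsilon]$ are pairwise disjoint. Hence $\cW_{t,\epsilon}$ is their disjoint union.

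Each endpoint $\beta_{t,j}\pm\epsilon$ lies in the open interval between $\beta_{t,j}$ and its neighbouring jump point, because its distance to that neighbour is at least $g-\epsilon>\epsilon>0$. So $\eta_t$ is continuously differentiable near each endpoint, and the endpoint values $\eta_t(\beta_{t,j}\pm\epsilon)$ are unambiguous.

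On $[\beta_{t,j}-\epsilon,\beta_{t,j}+\epsilon]$, I define $\eta_{t,\epsilon}$ as the affine interpolation between these two values. By construction it is continuous at both endpoints, and hence continuous on all of $\R$.

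\emph{Slope bound.} By the mean value theorem on $(\beta_{t,j},\beta_{t,j}+\epsilon]$ and on $[\beta_{t,j}-\epsilon,\beta_{t,j})$, with $|\eta_t'|\le L$, we have
\[
|\eta_t(\beta_{t,j}+\epsilon)-\eta_t(\beta_{t,j}^+)|\le L\epsilon,\qquad |\eta_t(\beta_{t,j}^-)-\eta_t(\beta_{t,j}-\epsilon)|\le L\epsilon .
\]
Therefore $|\eta_t(\beta_{t,j}+\epsilon)-\eta_t(\beta_{t,j}-\epsilon)|\le|\Delta_{t,j}|+2L\epsilon\le\Delta+2L\epsilon$. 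Dividing by the window length $2\epsilon$ shows that the affine slope is at most $L+\Delta/(2\epsilon)$.

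Outside $\cW_{t,\epsilon}$, the function $\eta_{t,\epsilon}=\eta_t$ is $C^1$ with $|\eta_t'|\le L$. Thus $\eta_{t,\epsilon}$ is continuous and $C^1$ off the finite set $\{\beta_{t,j}\pm\epsilon\}_j$, with derivative bounded by $L+\Delta/(2\epsilon)$ wherever it exists. Continuity plus piecewise integration of the derivative across the finitely many breakpoints gives the Lipschitz constant $L+\Delta/(2\epsilon)$.

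\emph{Uniform closeness on $\cW_{t,\epsilon}$.} Fix $u\in[\beta_{t,j}-\epsilon,\beta_{t,j}+\epsilon]$ and use the reference value $a=\eta_t(\beta_{t,j}-\epsilon)$.

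First, $\eta_{t,\epsilon}(u)$ lies between the two endpoint values. So by the slope step, $|\eta_{t,\epsilon}(u)-a|\le\Delta+2L\epsilon$.

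Second, for $\eta_t(u)$ there are three cases:
\begin{itemize}
\item If $u<\beta_{t,j}$, then $|\eta_t(u)-a|\le L\epsilon$.
\item If $u>\beta_{t,j}$, then $|\eta_t(u)-a|\le L\epsilon+|\Delta_{t,j}|+L\epsilon$, passing through both one-sided limits.
\item If $u=\beta_{t,j}$, then Assumption~\ref{ass}(iii) places $\eta_t(\beta_{t,j})$ between the one-sided limits, so $|\eta_t(u)-a|\le L\epsilon+\Delta$.
\end{itemize}
In every case $|\eta_t(u)-a|\le\Delta+2L\epsilon$.

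By the triangle inequality, $|\eta_{t,\epsilon}(u)-\eta_t(u)|\le2\Delta+4L\epsilon$. Since $\epsilon<g/2$, this is at most $2\Delta+2Lg=\Delta'$.

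\emph{Main obstacle.} There is no real analytic obstacle; the result is elementary. The one step that needs care is the value of $\eta_t$ exactly at the jump point. This is precisely where Assumption~\ref{ass}(iii) is needed, since an arbitrary tie-breaking value could otherwise lie outside the one-sided limits and break the uniform bound.
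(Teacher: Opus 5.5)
Your proof is correct and follows essentially the same route as the paper. Like the paper, you show that every value of $\eta_t$ on each window lies within $\Delta+2L\epsilon$ of $\eta_t(\beta_{t,j}-\epsilon)$; this includes the endpoints and the value at the jump point, which Assumption~\ref{ass}(iii) controls. That one bound gives the slope bound $L+\Delta/(2\epsilon)$ and, via the triangle inequality, $|\eta_{t,\epsilon}-\eta_t|\le2\Delta+4L\epsilon\le\Delta'$. You also spell out details the paper leaves implicit: the windows are disjoint, and continuity plus a piecewise-$C^1$ bounded derivative gives the global Lipschitz constant.
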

\begin{proof}
Each window $[\beta_{t,j}-\epsilon,\beta_{t,j}+\epsilon]$ contains one jump point, such that, as in the proof of Lemma~\ref{lem:cross}, the values of $\eta_t$ in it, including its endpoints, lie within $2L\epsilon+\Delta$ of $\eta_t(\beta_{t,j}-\epsilon)$. This bounds the slope of $\eta_{t,\epsilon}$ by $L+\Delta/(2\epsilon)$, and $|\eta_{t,\epsilon}-\eta_t|$ by $2\Delta+4L\epsilon\le\Delta'$.
\end{proof}

\begin{proof}[In view of the above Theorem~\ref{thm:se} can be proved as follows]\quad\\[-2.5ex]

\emph{Step 1 (\ac{SE} of the Ramps):}
Let $\sbar_{t,\epsilon}$ follow \eqref{eq:se} with $\eta_{t,\epsilon}$, and let $\bar b_{t,\epsilon}=\E[\eta'_{t,\epsilon}(x+\sbar_{t,\epsilon}\zt)]=\E[\zt\eta_{t,\epsilon}(x+\sbar_{t,\epsilon}\zt)]/\sbar_{t,\epsilon}$, by Lemma~\ref{lem:stein} without jumps.
We show that $\sbar_{t,\epsilon}\to\sbar_t$ and $\bar b_{t,\epsilon}\to\bar b_t$ as $\epsilon\to0$, by induction from $\sbar_{1,\epsilon}=\sbar_1$. Let $G_\epsilon(\sigma)=\E[(\eta_{t,\epsilon}(\bar x_\sigma)-x)^2]$, with $\bar x_\sigma=x+\sigma\zt$, and define $G(\sigma)$ likewise with $\eta_t$.
Since the two integrands differ only if $\bar x_\sigma\in\cW_{t,\epsilon}$, Lemma~\ref{lem:ramp}, Cauchy--Schwarz and \eqref{eq:tube} yield $|G_\epsilon(\sigma)-G(\sigma)|\le C\sqrt\epsilon$ uniformly over $\sigma\in[\sigma_z,\sigma_{\max}]$, where $\sigma_{\max}<\infty$ bounds $\sbar_t$ and $\sbar_{t,\epsilon}$ for $t\le T+1$ and $\epsilon<g/2$, since $\eta_t$ grows at most linearly and $|\eta_{t,\epsilon}|\le|\eta_t|+\Delta'$.
In turn, $G$ is continuous at $\sbar_t$ by dominated convergence, as $\eta_t$ is continuous off $\cJ_t$, such that $\sbar^2_{t+1,\epsilon}=\sigma_z^2+\alpha G_\epsilon(\sbar_{t,\epsilon})\to\sbar^2_{t+1}$.
The same argument with $\zt\eta_{t,\epsilon}(\bar x_\sigma)$ and $\zt\eta_t(\bar x_\sigma)$ yields $\bar b_{t,\epsilon}\to\bar b_t$.

\emph{Step 2 (Ramped \ac{AMP}):} By Lemmas~\ref{lem:bm} and~\ref{lem:ramp}, a.s., for $t\le T+1$, the empirical distribution of $\{(x_k,\bar x^t_{\epsilon,k})\}_{k=1}^K$ converges in Wasserstein-2 distance to the law of $(x,x+\sbar_{t,\epsilon}\zt)$, $\norm{\br^t_\epsilon}^2/N\to\sbar^2_{t,\epsilon}$, and $b_{t,\epsilon}\to\bar b_{t,\epsilon}$ for $t\le T$.
By the portmanteau theorem \cite[Thm.~2.1]{billingsley1999convergence} for the closed sets $\cJ_{t,s}$ and $\cW_{t,\epsilon}$, and by \eqref{eq:tube}, $\limsup_K\nu_{\bxb^t_\epsilon}(\cJ_{t,s})\le2Jsf_{\max}$ and $\limsup_K\nu_{\bxb^t_\epsilon}(\cW_{t,\epsilon})\le2J\epsilon f_{\max}$.
We work on the event of probability one on which these statements hold for all $\epsilon\in\mathcal E$ and all rational $s>0$, hence for all $s>0$, and on which $\norm{\bH}_{\rm op}\to1+\sqrt\alpha$ \cite{geman1980limit}.

\emph{Step 3 (One-step comparison):} Fix $\epsilon\in\mathcal E$, let $e^t=\norm{\bxb^t-\bxb^t_\epsilon}/\sqrt K$, $e_r^t=\norm{\br^t-\br^t_\epsilon}/\sqrt N$ and $e_x^t=\norm{\bx^t-\bx^t_\epsilon}/\sqrt K$, and let $\bar e^t,\bar e_r^t,\bar e_x^t$ be their limits superior as $K\to\infty$.
Writing $\bx^t-\bx^t_\epsilon=[\eta_t(\bxb^t)-\eta_t(\bxb^t_\epsilon)]+[\eta_t(\bxb^t_\epsilon)-\eta_{t,\epsilon}(\bxb^t_\epsilon)]$, and bounding the first term by \eqref{eq:flip} with $\bu=\bxb^t_\epsilon$ and the second by Lemma~\ref{lem:ramp} and Step~2, we obtain, for every $s>0$,
\begin{equation}   
(\bar e_x^t)^2\le4L'^2(\bar e^t)^2+4\Delta^2\big[2Jsf_{\max}+(\bar e^t)^2/s^2\big]+4\Delta'^2J\epsilon f_{\max},
\end{equation}
such that $s=(\bar e^t)^{2/3}$, or $s\to0$ if $\bar e^t=0$, yields
\begin{equation}
\label{eq:holder}
(\bar e_x^t)^2\le C_1\big[(\bar e^t)^2+(\bar e^t)^{2/3}+\epsilon\big].
\end{equation}

By \eqref{eq:amp}, $\br^{t+1}-\br^{t+1}_\epsilon=-\bH(\bx^t-\bx^t_\epsilon)+\alpha b_t(\br^t-\br^t_\epsilon)+\alpha(b_t-b_{t,\epsilon})\br^t_\epsilon$ and $\bxb^{t+1}-\bxb^{t+1}_\epsilon=\bx^t-\bx^t_\epsilon+\bH^{\mathsf T}(\br^{t+1}-\br^{t+1}_\epsilon)$, so that, with $\norm{\bH}_{\rm op}\to1+\sqrt\alpha$ and $\norm{\br^t_\epsilon}/\sqrt N\to\sbar_{t,\epsilon}$,
\begin{eqnarray}
\bar e_r^{t+1}\le(1+\sqrt\alpha)\sqrt\alpha\,\bar e_x^t\hspace{-4ex}&&+\alpha\,\limsup_K|b_t|\,\bar e_r^t\\
&&+\alpha\sbar_{t,\epsilon}\limsup_K|b_t-b_{t,\epsilon}|,\nonumber
\end{eqnarray}
\begin{equation}
\bar e^{t+1}\le\bar e_x^t+(1+\sqrt\alpha)\alpha^{-1/2}\,\bar e_r^{t+1}.
\end{equation}

\emph{Step 4 (\ac{SE} from closeness):} Suppose that $\bar e^t,\bar e_r^t\le\phi_t(\epsilon)$ for all $\epsilon\in\mathcal E$, with $\phi_t(\epsilon)\to0$ as $\epsilon\to0$. For PL2 $\psi$ with constant $C$, since $\av{(\bxb^t_\epsilon)^2}\to E_d+\sbar_{t,\epsilon}^2$, Cauchy--Schwarz yields $\limsup_K|\av{\psi(\bx,\bxb^t)}-\av{\psi(\bx,\bxb^t_\epsilon)}|\le C_2(\epsilon)\bar e^t$, with $C_2(\epsilon)=C(1+2\sqrt{E_d}+2(E_d+\sigma_{\max}^2)^{1/2}+\bar e^t)$ bounded over $\mathcal E$. Hence, $\limsup_K|\av{\psi(\bx,\bxb^t)}-\E[\psi(x,x+\sbar_t\zt)]|\le C_2(\epsilon)\phi_t(\epsilon)+|\E[\psi(x,x+\sbar_{t,\epsilon}\zt)]-\E[\psi(x,x+\sbar_t\zt)]|$, which vanishes as $\epsilon\to0$ by Step~1 and the continuity of $\sigma\mapsto\E[\psi(x,x+\sigma\zt)]$, proving \eqref{eq:se_limit}. Likewise, $|\hat\sigma_t-\sbar_t|\le e_r^t+|\norm{\br^t_\epsilon}/\sqrt N-\sbar_{t,\epsilon}|+|\sbar_{t,\epsilon}-\sbar_t|$ yields $\hat\sigma_t\to\sbar_t$.

As in Lemma~\ref{lem:bm}, \eqref{eq:se_limit} for all PL2 $\psi$ gives Wasserstein-2 convergence to $\mu_t$, hence weak convergence with uniformly integrable $\|\mathbf a\|^2$, which extends \eqref{eq:se_limit} to every $\psi$ of quadratic growth with $\mu_t(D_\psi)=0$ \cite[Thms.~2.7, 3.5, 3.6]{billingsley1999convergence}. For $t\le T$, this yields $b_t\to\bar b_t$: trivially in case (a), and in case (b) because $\av{\eta'_{t,{\rm ae}}(\bxb^t)}\to\E[\eta'_{t,{\rm ae}}(x+\sbar_t\zt)]$, the boundary term of \eqref{eq:bcorr} is continuous in $\hat\sigma_t>0$, and \eqref{eq:stein} identifies the limit.

\emph{Step 5 (Induction):} At $t=1$, $\br^1=\br^1_\epsilon=\by$ and $\bxb^1=\bxb^1_\epsilon$, so that $\bar e^1=\bar e^1_r=0$.
If the hypothesis of Step~4 holds at $t\le T$, it yields $b_t\to\bar b_t$, so that $\limsup_K|b_t|=|\bar b_t|$ and $\limsup_K|b_t-b_{t,\epsilon}|=|\bar b_t-\bar b_{t,\epsilon}|$, which vanishes as $\epsilon\to0$ by Steps~1--2. Inserting \eqref{eq:holder} into Step~3 then yields $\bar e^{t+1},\bar e_r^{t+1}\le\phi_{t+1}(\epsilon)$, with $\phi_{t+1}$ explicit in $\phi_t$, $\epsilon$ and $|\bar b_t-\bar b_{t,\epsilon}|$, and $\phi_{t+1}(\epsilon)\to0$. By induction, Step~4 applies at every $t\le T+1$, which proves the \ac{SE} statements, and $\limsup_K\max_{t\le T+1}(e^t+e_r^t)\le2\sum_{t\le T+1}\phi_t(\epsilon)\to0$ proves \eqref{eq:close}.
\end{proof}

\vspace{-2ex}
\section{Numerical Results}\label{sec:numerics}
\vspace{-0.5ex}
In this section, we evaluate slicer-\ac{AMP} with the three coefficients of Table~\ref{tab:coef}, considering \ac{iid} symbols drawn uniformly from $\mathcal M=\{\pm1,\pm3,\dots,\pm(M-1)\}$, scaled to $E_d=1$, such that the \ac{SNR} is $1/\sigma_z^2$, with independent realizations of $(\bH,\bx,\bz)$.
%
%
First, Fig.~\ref{fig:main}(a) verifies Theorem~\ref{thm:se} for 8-\ac{PAM} slicer-\ac{AMP} with $\alpha=\frac12$ and $T=12$ at 20~dB, where the \ac{SE} value of the \ac{MSE} after $12$ iterations is $2.49\cdot10^{-2}$ and $\bar b_t\in[0.71,0.87]$.
The results show that, with $\bar b_t$, the \ac{MSE} gap falls from $1.0\cdot10^{-2}$ at $K=64$ to $4.3\cdot10^{-5}$ at $K=8192$, with a standard error of $1.9\cdot10^{-5}$, decaying like $K^{-1.14}$ (fit for $K\ge64$), while with $\hat b_t$ it falls from $5.3\cdot10^{-3}$ to $3.4\cdot10^{-5}$, like $K^{-1.03}$.
In turn, the \ac{SER} of the slicer decisions on $\bxb^{13}$ approaches its \ac{SE} value $0.1271$ (Corollary~\ref{cor:consequences}), reaching $0.1273$ ($\bar b_t$) and $0.1272$ ($\hat b_t$) at $K=8192$.
In contrast, with the standard coefficient $b_t=0$, the gap is $1.5$--$1.6$ for $K\ge64$, and Fig.~\ref{fig:main}(b) shows that the \ac{MSE} leaves the \ac{SE} trajectory at the second iteration and exceeds $E_d=1$ from $t=6$ on.

%
Next, we compare in Fig.~\ref{fig:main}(c) the \ac{SER} of slicer-\ac{AMP} for 4- and 8-\ac{PAM} with $K=1024$, $N=2048$ and $T=20$.
The results show that, with the standard coefficient, the \ac{SER} stays near $0.59$ (4-\ac{PAM}) and $0.80$ (8-\ac{PAM}) at every \ac{SNR}, while with $\hat b_t$, slicer-\ac{AMP} follows \ac{SE} and approaches Bayes-optimal \ac{AMP}, whose denoiser is the posterior mean at noise level $\hat\sigma_t$, at a cost of $\mathcal O(M)$ operations per symbol\footnote{With $\sbar_t$ in place of $\hat\sigma_t$, this benchmark is covered by Lemma~\ref{lem:bm}.}.
For 4-\ac{PAM} at 18~dB, in particular, the \ac{SER} of slicer-\ac{AMP} is $3.3\cdot10^{-4}$ (\ac{SE}: $3.0\cdot10^{-4}$; Bayes-optimal \ac{AMP}: $3.1\cdot10^{-4}$), and for 8-\ac{PAM} it is $5.1\cdot10^{-4}$ at 24~dB (\ac{SE}: $5.1\cdot10^{-4}$; Bayes-optimal \ac{AMP}: $4.8\cdot10^{-4}$) and $1.0\cdot10^{-5}$ at 26~dB (\ac{SE}: $1.2\cdot10^{-5}$), with no error in $5.12\cdot10^6$ symbols at 28 and 30~dB.

We emphasize, however, that with $\bar b_t$, slicer-\ac{AMP} exhibits an error floor at high \ac{SNR}, near $10^{-2}$ for 8-\ac{PAM}, caused by rare realizations that leave the \ac{SE} trajectory and do not recover,\footnote{A realization fails if over $1\%$ of its symbols are in error.} as $\bar b_t$ follows the \ac{SE} schedule rather than the realization.
For 8-\ac{PAM} at 24~dB, for instance, 7 of 400 realizations fail at $K=1024$ and none of 200 at $K=4096$, where the \ac{SER} is $5.0\cdot10^{-4}$ (\ac{SE}: $5.1\cdot10^{-4}$), which indicates a finite-size effect consistent with Theorem~\ref{thm:se}.

\vspace{-1ex}
\section{Conclusion}
\vspace{-0.5ex}
We offered a rigorous \ac{SE} analysis of \ac{AMP} with discontinuous denoisers, yielding a direct transfer of the Lipschitz \ac{SE} theorem via a crossing bound that keeps the steps of \ac{AMP} close to those of \ac{AMP} on Lipschitz ramps, and an \acs{SE}-corrected Onsager coefficient, computed online from the residual and the known prior.
We showed that \ac{AMP} follows \ac{SE} if its Onsager coefficient {is the distributional or the \acs{SE}-corrected one, both of which contain the boundary term}, enabling hard decisions, finite-level quantizers and hard thresholding as admissible \ac{AMP} denoisers without smoothing, whereas the standard coefficient is inconsistent whenever the boundary term is non-zero.
Compared to tensor programs \cite{yang2019scaling,yang2020tensor}, the argument requires only a finite second moment of the prior, uses Lipschitz \ac{SE} as a black box, and justifies an online coefficient using neither the signal nor the \ac{SE} trajectory. Rates, numbers of iterations that grow with $K$, data-dependent thresholds, non-separable denoisers, and orthogonal and vector \ac{AMP} are left for future work.
\vspace{-2ex}

\balance
\bibliographystyle{IEEEtran}
\bibliography{references}

\end{document}